\newcommand{\id}{\mathit{id}}
\newcommand{\dom}{\mathit{dom}}
\newcommand{\cod}{\mathit{cod}}
\newcommand{\so}{\ell}
\newcommand{\ta}{r}
\newcommand{\The}{\rotatebox[origin=c]{180}{$\iota$}}
\newcommand{\Pow}{\mathcal{P}}
\newtheorem{theorem}{Theorem}[section]
\newtheorem{proposition}[theorem]{Proposition}
\newtheorem{lemma}[theorem]{Lemma}
\theoremstyle{definition}
\newtheorem{definition}[theorem]{Definition}
\definecolor{jccolor}{rgb}{1,0.5,0.5}
\definecolor{sdcolor}{cmyk}{1,0,1,0}
\definecolor{gscolor}{cmyk}{1,0,0,0}
\begin{document}

\title{Relational Semigroups and Object-Free Categories}

\author{James Cranch}
\author{Simon Doherty}
\author{Georg Struth}
\affil{University of Sheffield, UK}

\date{}

\maketitle









\begin{abstract}
  This note relates axioms for partial semigroups and monoids with
  those for small object-free categories, either with multiple
  monoidal units or with source and target maps. We discuss the
  adjunction of a zero element to both kinds of category and
  provide examples that separate the algebras considered.
\end{abstract}



\pagestyle{plain}


\section{Introduction}\label{S:introduction}

Partial semigroups and monoids have been studied widely in mathematics
and applied in computer science and quantum physics. Among the
examples are object-free or arrows-only categories. These can be
axiomatised in at least two different ways~\cite{MacLane98}: as
certain partial monoids with many units~\cite[p.9]{MacLane98} or as
partial semigroups with source and target
operations~\cite[p.279]{MacLane98}. An early example of the first type
of axiomatisation is due to Kurosh, Livshits and
Shul'geifer~\cite{KuroshLS60}. A precursor are Brandt
groupoids~\cite{Brandt27}, which generalise groups to a partial
setting. The second type of axiomatisation features also in Freyd and
Scedrov's book~\cite{FreydS90}. It is foreshadowed by Schweizer and
Sklar's work on function systems~\cite{SchweizerS67}.

We have previously used partial semigroups and monoids in various
lifting constructions~\cite{DongolHS16,DongolHS17}. This note intends
to clarify their relationship with categories.

We model partial operations of type $X\times X\to X$ as ternary
relations on $X$ that satisfy a functionality condition.  Type-one
categories then turn out to be partial monoids with multiple units
that satisfy a certain coherence condition, which corresponds to the
matching condition between source and target objects in the
composition of arrows in a category.  We outline how this relational
approach gives rise to the more conventional model of partial
operations as functions of type $D\to X$ (or pullbacks), where
$D\subseteq X\times X$ is the domain of definition of the partial
operation. Next we relate the approach to type-two object-free
categories. Alternatively, partial operations are often modelled in a
total setting obtained by adjoining a zero or bottom element. We
describe the effect of this adjunction on both types of partial
semigroups and object-free categories. Finally, we present examples
that separate the various partial algebras considered.


\section{Properties of Ternary Relations}\label{S:preliminaries}

We start with a list of properties of a ternary relation $R\subseteq X\times Y\times Z$.
\begin{itemize}
\item $R$ is \emph{weakly functional} (\emph{functional}) if for all
  $y\in Y$ and $z\in Z$ there is at most (precisely) one $x\in X$ such
  that $R^x_{yz}$. A (weakly) functional ternary relation on $X$ is a
  (\emph{partial}) \emph{operation}.
\item $R$ is \emph{relationally associative} if
$\exists v.\ R^u_{xv} \land R^v_{yz} \Leftrightarrow \exists v.\
R^u_{vz}\land R^v_{xy}$.
\item $R$ is \emph{coherent} if
  $R^v_{xy} \land D^y_z\Rightarrow D^v_z$, where
  $D^x_y\Leftrightarrow \exists z.\ R^z_{xy}$. 
\item An element $e\in Y$ is
  a \emph{relational left unit} of $R$ if $\exists x\in X.\ R^x_{ex}$
  and $\forall x,y\in X.\ R^y_{ex} \Rightarrow y =x$. It is a
  \emph{relational right unit} of $R$ if
  $\exists x\in X.\ R^x_{xe}$ and
  $\forall x,y\in X.\ R^y_{xe} \Rightarrow y =x$.  It is a \emph{unit}
  of $R$ if it is either a left or a right unit of $R$. We write $E$
  for the set of units of $R$.
\end{itemize}

Partial operations as ternary relations date back to Brandt's work on
generalised groups~\cite{Brandt27}. Intuitively, relations
$R:X\to Y\to Z\to 2$ are isomorphic to functions
$f_R:Y\times Z \to \Pow\, X$. Then, for all $x,y\in X$, $|f_R\, x\, y|=1$ if
$R$ is an operation, and $|f_R\, x\, y|\le 1$ if $R$ is a partial
operation ($f_R\, x\, y=\emptyset\Leftrightarrow \neg D^x_y$).
\begin{itemize}
\item A \emph{morphism} between relations $R\subseteq X\times Y\times Z$ and
$R'\subseteq X'\times Y'\times Z'$ is a triple of functions
$f: X \to X'$, $g: Y \to Y'$ and $h: Z \to Z'$ that satisfies
\begin{equation*}
  R^x_{yz} \Rightarrow {R'}^{(f\, x)}_{(g\, y)(h\, z)}
\end{equation*}
\item It is \emph{bounded} if
${R'}^{f\, x}_{vw} \Rightarrow \exists y,z.\ (v=g\, y) \land (w=h\, z) \land R^x_{yz}$
\end{itemize}
We consider morphisms because they correspond to functors between
object-free categories. Bounded morphisms are interesting for
Stone-type dualities between ternary relations and quantales, see
Section~\ref{S:unit-collapse}. For relations $R \subseteq X \times X \times X$,
we consider morphisms of the form $(f, f, f)$ rather than the
more general $(f, g, h)$.

Partiality is often modelled in a total setting by adjoining a zero or
bottom element. 
\begin{itemize}
\item A \emph{relational zero} of $R$ is an element $0\in X$
that satisfies $R^0_{0x}$ and $R^0_{x0}$ for all $x\in X$.  

\end{itemize}
One can always adjoin a zero to a relational structure $(X,R)$ by
defining $X_0=X\cup\{0\}$ and extending $R$ to $R_0$ by adding the
triples $\left(R_0\right)^0_{xy}\Leftrightarrow \neg D^x_y$ as well as
$(R_0)^0_{0x}$, $(R_0)^0_{x0}$ and $(R_0)^0_{00}$.  This makes $R$
total: $\left(D_0\right)^x_y$ always holds.


\section{Relational Monoids and Type-One Categories}\label{S:relational-monoids}

\begin{definition}\label{D:rel-semigroup}
  A \emph{relational semigroup} is relational structure $(X,R)$ 
  with a relationally associative relation
  $R\subseteq X\times X\times X$.  It is a \emph{(partial) semigroup} if
  $R$ is a (partial) operation.
\end{definition}
\begin{definition}
  A \emph{relational monoid} is a relational semigroup $(X,R)$ in
  which for every $x\in X$ there are $e,e'\in E$ such that $D^e_x$ and
  $D^x_{e'}$. It is a \emph{(partial) monoid} if $R$ is a (partial)
  operation.
\end{definition}
\begin{definition}
  An \emph{object-free category} is a coherent partial monoid.
\end{definition}
Objects and identity arrows in a category are obviously in one-to-one
correspondence, so that the latter can always be used in place of the
former. This is the main idea behind object-free categories.

For partial semigroups and monoids, we can define a partial
composition by $x\cdot y = \The z.\ R^z_{xy}$ whenever $D^x_y$, where
$\The$ denotes the definite description operator. It follows that
$R^x_{yz}\Leftrightarrow D^y_z\land x= y\cdot z$.   The
standard axioms for small object-free categories can then be derived using
this algebraic notation.
\begin{lemma}\label{P:ahs-derivable}
In every object-free category, 
\begin{gather*}
  D^x_y \wedge D^{x\cdot y}_z \Leftrightarrow  D^x_{y\cdot z} \wedge
  D^y_z,\qquad
 D^x_y \land D^{x\cdot  y}_z \Rightarrow  (x \cdot y) \cdot z = x
 \cdot (y \cdot z),\\
\forall x\exists e\in E.\ D^e_x, \qquad \forall x\exists e\in E.\ D^x_e,\\
D^x_y \land D^y_z \Rightarrow D^{x\cdot y}_z.
\end{gather*}
\end{lemma}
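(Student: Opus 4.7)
The plan is to read each claim off from the three axioms on $R$ (partial operation, relational associativity, coherence, monoid axiom) after translating the algebraic shorthand $x\cdot y$ back into $R^{x\cdot y}_{xy}$ via the stated equivalence $R^x_{yz} \Leftrightarrow D^y_z \land x = y\cdot z$.

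I would begin by handling the biconditional $D^x_y \wedge D^{x\cdot y}_z \Leftrightarrow D^x_{y\cdot z} \wedge D^y_z$ together with the associativity equation in a single step. The key observation is that, after translation, the right-hand side of the relational associativity schema, $\exists v.\ R^u_{vz}\land R^v_{xy}$, is equivalent to $D^x_y \wedge D^{x\cdot y}_z$ together with $u = (x\cdot y)\cdot z$; while the left-hand side, $\exists v.\ R^u_{xv}\land R^v_{yz}$, is equivalent to $D^y_z \wedge D^x_{y\cdot z}$ together with $u = x\cdot(y\cdot z)$. Here functionality of $R$ is what collapses each existential witness $v$ to a unique value ($x\cdot y$ on one side, $y\cdot z$ on the other). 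Reading the associativity axiom under these equivalences therefore delivers both the biconditional on definedness and, when both sides hold, the shared value of $u$ yields the equation $(x\cdot y)\cdot z = x\cdot(y\cdot z)$.

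The two existence-of-units claims $\forall x\exists e\in E.\ D^e_x$ and $\forall x\exists e\in E.\ D^x_e$ are immediate restatements of the defining axiom for a relational monoid. For the final implication $D^x_y \land D^y_z \Rightarrow D^{x\cdot y}_z$, I would appeal to coherence directly: from $D^x_y$ one has $R^{x\cdot y}_{xy}$, and then the coherence axiom $R^v_{xy} \land D^y_z \Rightarrow D^v_z$ instantiated at $v = x\cdot y$ gives the desired $D^{x\cdot y}_z$.

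The only real subtlety I anticipate is bookkeeping: every application of relational associativity must be set up with the correct substitution of $x,y,z,u,v$, and functionality of $R$ must be invoked repeatedly to convert each existential witness into an equality with a specific composite such as $y\cdot z$. No genuine obstacle is expected; the proof amounts to a careful sequence of translations between the ternary notation $R^x_{yz}$ and the algebraic notation $x = y\cdot z$.
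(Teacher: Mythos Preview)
Your proposal is correct and is precisely the routine translation the paper has in mind; the paper states the lemma without proof, treating it as an immediate consequence of unpacking the shorthand $R^x_{yz}\Leftrightarrow D^y_z\land x=y\cdot z$ in the relational-associativity, monoid, and coherence axioms. Your handling of each clause---in particular using weak functionality to collapse the existential witnesses in the associativity schema and instantiating coherence at $v=x\cdot y$---is exactly right.
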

The laws in the first line are standard axioms for partial
semigroups. Together with those in the second line they axiomatise
partial monoids with many units (Dongol, Hayes and
Struth~\cite{DongolHS16} use different, but equivalent unit
axioms). The law in the third line is the algebraic version of
coherence.

In relational monoids, it is easy to show that units can always be
composed with themselves, but never with other units. Moreover, every
element has precisely one left and one right unit.  This functional
dependency is captured by source and target functions $\ell,r:S\to S$
defined as
\begin{equation*}
  \so\, x = \The e\in E.\ R^x_{ex}\qquad \text{ and }\qquad \ta\, x =
  \The e\in E.\ R^x_{xe}.
\end{equation*}
\begin{lemma}\label{P:rel-props3}
  Let $X$ be a relational monoid and $e,e'\in E$. Then, for all $x,y\in X$,
  \begin{enumerate}
\item $\ta\, (\so\, x) = \so\, x$ and $\so\, (\ta\, x) = \ta\, x$,
\item $R^x_{(\so\, x)x}$ and $R^x_{x(\ta\, x)}$\quad ($\so\, x \cdot x = x$
  and $x\cdot \ta\, x = x$),
\item $R^v_{xy}\land R^w_{x(\so\, y)} \Rightarrow \so\, v = \so\, w$
  and $R^v_{xy} \land R^w_{(\ta\, x)y} \Rightarrow \ta\, v= \ta\, w$\\
  ($D^x_y\Rightarrow \so\, (x\cdot y) = \so\, (x\cdot \so\, y)$
  and
  $D^x_y\Rightarrow \ta\, (x\cdot y) = \ta\, (\ta\, x\cdot y)$),
\item $R^v_{xy}\Rightarrow \so\, v = \so\, x$ and
  $R^v_{xy} \Rightarrow \ta\, v= \ta\, y$\quad
  ($D^x_y\Rightarrow \so\, (x\cdot y) =\so\, x$ and
  $D^x_y\Rightarrow \ta\, (x\cdot y) =\ta\, y$),
\item $D^x_y \Rightarrow \left( R^v_{(\ta\,
    x) (\so\, y)} \Leftrightarrow  R^v_{(\so\, y) (\ta\, x)}\right)$\quad
($D^x_y\Rightarrow \so\, x \cdot \ta\, y=\ta\, y\cdot \so\, x$),
\item  $R^u_{x y} \Rightarrow \left( R^v_{(\ta\,  x) y}
  \Leftrightarrow 
  R^v_{y (\ta\,  u)}\right)$\quad
 ($D^x_y\Rightarrow \ta\, x \cdot y = y \cdot \ta\, (x\cdot
y)$).
\end{enumerate}
Formula in brackets indicate algebraic variants for partial monoids.
\end{lemma}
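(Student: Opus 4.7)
The plan is to rely on the two preliminary facts listed just before the lemma---every element $x$ has a unique $\so\, x, \ta\, x \in E$, and any unit $e \in E$ satisfies $R^e_{ee}$ but composes with no other unit---together with relational associativity and uniqueness arguments that promote the existential witnesses delivered by associativity into equalities.

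Item 2 is the defining property of $\so$ and $\ta$. Item 1 then follows because $\so\, x \in E$, so $R^{\so\, x}_{(\so\, x)(\so\, x)}$ witnesses $\so\, x$ as the left and right unit of itself (and dually for $\ta\, x$). Item 4 is the workhorse. Given $R^v_{xy}$, I would instantiate the associativity biconditional with outer index $v$, first factor $\so\, x$, middle factor $x$, and last factor $y$: the right-hand existential is witnessed by $m = x$ via $R^x_{(\so\, x) x}$ (item 2) together with the hypothesis $R^v_{xy}$. The resulting left-hand existential yields some $m$ with $R^v_{(\so\, x) m}$ and $R^m_{xy}$, and unit uniqueness forces $m = v$; so $R^v_{(\so\, x) v}$, and uniqueness of the left unit of $v$ gives $\so\, v = \so\, x$. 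The dual argument using $R^x_{x(\ta\, x)}$ gives $\ta\, v = \ta\, y$. Item 3 is then immediate: applying item 4 to $R^v_{xy}$ yields $\so\, v = \so\, x$ and to $R^w_{x(\so\, y)}$ yields $\so\, w = \so\, x$, whence $\so\, v = \so\, w$ (and analogously on the right using item 1 so that $\ta\, (\ta\, x) = \ta\, x$ causes no trouble).

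Item 5 reduces to the preliminary fact that two units $\ta\, x, \so\, y \in E$ compose in either order only when they are equal, and that the composite is then the common unit. Hence both sides of the biconditional hold precisely when $\ta\, x = \so\, y$ and $v$ equals that common unit. For item 6, I would first derive the auxiliary claim $R^u_{xy} \Rightarrow \ta\, x = \so\, y$ by a parallel associativity instantiation (first factor $x$, middle $\ta\, x$, last factor $y$, with right-hand witness $m = x$ via $R^x_{x(\ta\, x)}$); this yields $D^{\ta\, x}_y$, from which unit uniqueness forces $\ta\, x = \so\, y$. Item 6 now follows by combining this with item 4 (which gives $\ta\, u = \ta\, y$) and item 2: both $R^v_{(\ta\, x) y}$ and $R^v_{y(\ta\, u)}$ reduce to the single condition $v = y$.

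The main obstacle is the auxiliary claim $D^x_y \Rightarrow \ta\, x = \so\, y$ used implicitly in item 5 and explicitly in item 6. In the object-free-category setting of Lemma~\ref{P:ahs-derivable} this is packaged inside coherence, but in the purely relational-monoid setting it has to be coaxed out of associativity plus the observation that units compose only with themselves. Once that step is isolated, the rest of the lemma is a disciplined alternation between associativity and uniqueness.
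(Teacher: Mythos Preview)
The paper states this lemma without proof, so there is no in-paper argument to compare against. Your sketch is correct and well organised: items~1 and~2 unpack the definitions; your associativity-plus-unit-cancellation derivation of item~4 works at the relational level (no weak functionality is used), and items~3,~5,~6 then follow as you indicate, with the auxiliary fact $D^x_y\Rightarrow \ta\,x=\so\,y$ obtained exactly as you describe---the paper itself later cites this implication (just after Lemma~\ref{P:coherent-rel-monoid}) as holding in all relational monoids. One minor remark: your argument for the relational form of item~5 already establishes the biconditional unconditionally, so the auxiliary claim is only needed there to make the \emph{algebraic} variant well-defined, not for the relational statement.
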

Up to definedness conditions, the statements in
Lemma~\ref{P:rel-props3}, except (4), are Schweizer and Sklar's axioms
for function systems~\cite{SchweizerS67}; (4) obviously holds in any
category.

Source and target functions bring coherence closer to categorical
intuitions.
\begin{lemma}\label{P:coherent-rel-monoid}
 In any relational monoid $(X,R)$, 
 \begin{equation*}
 R\text{ is coherent}
\Leftrightarrow \left(\forall x,y\in X. \forall e\in E.\ D^x_e\land
  D^e_y\Rightarrow D^x_y\right)\Leftrightarrow \left(\forall x,y\in X.\ \ta\, x =\so\, y\Rightarrow
  D^x_y\right).
 \end{equation*}
\end{lemma}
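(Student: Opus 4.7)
The plan is to prove the three-way equivalence via the cycle $(1) \Rightarrow (3) \Rightarrow (2) \Rightarrow (1)$, labelling coherence as $(1)$, the middle condition as $(2)$, and the source/target matching condition $\ta\, x = \so\, y \Rightarrow D^x_y$ as $(3)$. Everything rests on Lemma~\ref{P:rel-props3} together with one auxiliary observation that I would establish up front.

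Auxiliary observation: for every unit $e \in E$, one has $D^x_e \Leftrightarrow e = \ta\, x$ and, symmetrically, $D^e_y \Leftrightarrow e = \so\, y$. The backward directions are immediate from Lemma~\ref{P:rel-props3}(2). For $D^x_e \Rightarrow e = \ta\, x$, pick a witness $R^v_{xe}$; since any unit satisfies $\so\, e = \ta\, e = e$ and $R^e_{ee}$, Lemma~\ref{P:rel-props3}(4) gives $\ta\, v = \ta\, e = e$. Applying Lemma~\ref{P:rel-props3}(6) to $R^v_{xe}$ and substituting $\ta\, v = e$ yields $R^w_{(\ta\, x)\, e} \Leftrightarrow R^w_{ee}$; taking $w = e$ forces $R^e_{(\ta\, x)\, e}$. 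Lemma~\ref{P:rel-props3}(4) applied to the latter gives $\so\, e = \so\, (\ta\, x)$, and combining with Lemma~\ref{P:rel-props3}(1) and $\so\, e = e$ gives $e = \ta\, x$. The proof of $D^e_y \Rightarrow e = \so\, y$ is symmetric.

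The cycle then goes as follows. For $(1) \Rightarrow (3)$, set $e = \ta\, x = \so\, y$ and invoke coherence on $R^x_{xe} \land D^e_y$, both of which are provided by Lemma~\ref{P:rel-props3}(2). $(3) \Rightarrow (2)$ is immediate: the auxiliary observation converts $D^x_e \land D^e_y$ with $e \in E$ into $\ta\, x = \so\, y$, and $(3)$ yields $D^x_y$. For $(2) \Rightarrow (1)$, given $R^v_{xy}$ and $D^y_z$, I would first derive the always-true converse $D^y_z \Rightarrow \ta\, y = \so\, z$: from a witness $R^w_{yz}$, relational associativity applied to the triple $y, \ta\, y, z$ (with the right-hand side of the biconditional witnessed by $R^y_{y\, (\ta\, y)}$ and $R^w_{yz}$) produces $D^{\ta\, y}_z$, whence the auxiliary observation gives $\ta\, y = \so\, z$. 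Combining this with $\ta\, v = \ta\, y$ from Lemma~\ref{P:rel-props3}(4) yields $\ta\, v = \so\, z$; setting $e = \ta\, v = \so\, z \in E$ and using Lemma~\ref{P:rel-props3}(2) to supply $D^v_e$ and $D^e_z$, $(2)$ concludes with $D^v_z$.

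The main obstacle is the auxiliary observation: the trick is to apply Lemma~\ref{P:rel-props3}(6) with the witness at hand and to recognise that, since a unit is fixed by both $\so$ and $\ta$, the bracketed equivalence collapses onto $R^w_{ee}$, whose solvability (witnessed by $w = e$) forces $\ta\, x = e$. The associativity step in $(2) \Rightarrow (1)$ needs some care in matching the two sides of the associativity biconditional, but is routine once one sees how to thread $\ta\, y$ through the triple $y, \ta\, y, z$.
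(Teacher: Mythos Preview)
The paper states Lemma~\ref{P:coherent-rel-monoid} without proof; the only hint is the sentence immediately following it, which records the converse implication $D^x_y \Rightarrow \ta\, x = \so\, y$ as a fact about all relational monoids. Your cycle $(1)\Rightarrow(3)\Rightarrow(2)\Rightarrow(1)$ is correct and complete, and the uses of Lemma~\ref{P:rel-props3} and relational associativity are all sound.

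Two remarks on presentation. First, your auxiliary observation $D^x_e \Leftrightarrow e = \ta\, x$ (for $e\in E$) is essentially the uniqueness-of-units claim the paper asserts in prose just before Lemma~\ref{P:rel-props3}; your derivation via Lemma~\ref{P:rel-props3}(6) is a nice uniform argument that does not need to distinguish whether $e$ is a priori a left or a right unit. Second, in the step $(2)\Rightarrow(1)$ you re-derive $D^y_z \Rightarrow \ta\, y = \so\, z$ by an associativity argument producing $D^{\ta\, y}_z$ and then invoking the auxiliary observation; this is exactly the converse the paper alludes to after the lemma, so you could simply cite it, but spelling it out as you do is harmless and makes the proof self-contained.
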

In any coherent relational monoid, in fact,
$D^x_y \Leftrightarrow \exists e\in E.\ D^x_e\land D^e_y
\Leftrightarrow \ta\, x = \so\, y$
because $D^x_y\Rightarrow \ta\, x =\so\, y$ holds already in
relational monoids. 

Partial semigroups, partial monoids and object-free categories are
usually defined slightly differently as algebraic structures
$(S,\cdot,D)$ in which composition is a partial operation of type
$D\to S$ for $D\subseteq S\times S$. The appropriate laws of
Lemma~\ref{P:ahs-derivable} are then imposed in this setting (plus the
obvious algebraic axioms defining the units in $E$ if needed).
Morphisms $f:S\to S'$ between such algebraic structures $S$ and $S'$
satisfy that $D^x_y$ implies ${D'}^{f\, x}_{f\, y}$ and
$f\, (x \cdot y) = f\, x \cdot' f\, y$. They are bounded if ${D'}^u_v$
and $f\, x = u\cdot' v$ imply that there exist $y$ and $z$ such that
$f\, y = u$, $f\, z= v$, $D^y_z$ and $x= y\cdot z$, as expected.

Lemma~\ref{P:ahs-derivable} can then be adapted to compare relational
structures with the corresponding algebraic ones, including morphisms.
We present only the case for object-free categories.
\begin{proposition}\label{P:of-cat-iso}
  The categories of object-free categories and algebraic object-free
  categories, both with (bounded) morphisms, are isomorphic.
\end{proposition}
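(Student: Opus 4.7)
The plan is to exhibit explicit functors $F$ and $G$ between the two categories and verify that they are mutually inverse, both on objects and on morphisms. On objects, $F$ sends a relational object-free category $(X,R)$ to the algebraic structure $(X,\cdot,D)$ where $D^x_y \Leftrightarrow \exists z.\ R^z_{xy}$ and $x\cdot y = \The z.\ R^z_{xy}$ whenever $D^x_y$; this is well-defined because $R$ is a partial operation. Conversely, $G$ sends $(S,\cdot,D)$ to $(S,R)$ with $R^z_{xy}\Leftrightarrow D^x_y \land z = x\cdot y$. On morphisms, $F$ sends the triple $(f,f,f)$ to the underlying function $f$, and $G$ does the reverse.

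First I would show that $F$ and $G$ land in the right categories on objects. For $F$, the equivalence $R^x_{yz}\Leftrightarrow D^y_z\land x= y\cdot z$ noted just after Definition of object-free categories lets every relational axiom be rewritten as an algebraic one; the relationally associative axiom, coherence, and the unit conditions translate precisely to the first, third, and second lines of Lemma~\ref{P:ahs-derivable}, so $(X,\cdot,D)$ is an algebraic object-free category. For $G$, functionality of $R$ is immediate from the definition, and the axioms listed in Lemma~\ref{P:ahs-derivable} translate back to relational associativity, coherence, and the existence of units exactly by reversing that same equivalence.

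Next I would check that $F$ and $G$ preserve morphisms and boundedness. A relational morphism $(f,f,f)$ satisfies $R^x_{yz}\Rightarrow {R'}^{f\,x}_{f\,y\,f\,z}$, which under the translation is exactly $D^y_z \land x = y\cdot z \Rightarrow {D'}^{f\,y}_{f\,z} \land f\,x = f\,y\cdot' f\,z$, i.e.\ the algebraic morphism condition. Boundedness translates term by term since the witnesses $y,z$ required in one definition are the same as those required in the other. Reversing this argument shows $G$ sends algebraic (bounded) morphisms to relational (bounded) morphisms. Functoriality of $F$ and $G$ (identities to identities, composition to composition) is immediate because both act as the identity on the underlying function.

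Finally I would verify $F\circ G = \id$ and $G\circ F = \id$ on objects; on morphisms this is trivial since each acts as the identity on the underlying set map. For $F\circ G$, starting from $(S,\cdot,D)$ and computing gives back $D$ and $\cdot$ unchanged because $\The z.\ (D^x_y\land z = x\cdot y) = x\cdot y$ when $D^x_y$. For $G\circ F$, starting from $(X,R)$ we recover $R^z_{xy} \Leftrightarrow D^x_y \land z = x\cdot y \Leftrightarrow R^z_{xy}$ by the definitional equivalence. The main obstacle is bookkeeping rather than mathematical depth: one must be careful that each direction of each axiom in Lemma~\ref{P:ahs-derivable} translates faithfully under partial definedness, and in particular that coherence and the unit axioms line up, since these are the parts where $D$ and $\cdot$ interact most subtly.
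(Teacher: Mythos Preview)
Your proposal is correct and follows essentially the same approach as the paper's sketch: both directions use the translations $x\cdot y = \The z.\ R^z_{xy}$ and $R^z_{xy}\Leftrightarrow D^x_y \land z = x\cdot y$, invoke Lemma~\ref{P:ahs-derivable} for the object-level correspondence, and check that (bounded) morphisms transfer term by term. Your write-up is simply a more explicit unpacking of the paper's routine sketch, including the verification that $F$ and $G$ are mutually inverse.
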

\begin{proof}[Sketch of Proof]
  The proof is routine. For objects, relative to
  Lemma~\ref{P:ahs-derivable}, it remains to derive the object-free
  category axioms from their algebraic counterparts.  For morphisms,
  it is easy to check that every bounded morphism in one kind of
  structure is a bounded morphism in the other kind, using the
  definitions of $\cdot$ and $D$ in object-free categories and
  defining $R^x_{yz} \Leftrightarrow x=y\cdot z\land D^x_y$ in their
  algebraic counterpart.  Preservation of units is automatic from
  these results.
\end{proof}


\section{$\ell r$-Semigroups  and Type-Two Categories}\label{S:lr-cats}

Lemma~\ref{P:rel-props3} translates the definitions of object-free
categories from Section~\ref{S:relational-monoids} into Mac Lane's and
Freyd and Scedrov's alternative form~\cite{MacLane98}, yet more
generally for relational semigroups.

\begin{definition}\label{D:lr-semig}
  A \emph{relational $\ell r$-semigroup} is a structure $(X,R,\ell,r)$ such that
  $\ell,r:X\to X$, $(X,\cdot,R)$ is a relational
  semigroup that satisfies laws (1), (2) and (4) of
  Lemma~\ref{P:rel-props3} and, for all $x,y\in X$,
  \begin{equation*}
    D^x_y\Leftrightarrow r\, x = \ell\, y,\qquad R^y_{(\ell\, x)x}\Rightarrow x = y,  \qquad 
R^y_{x(r\, x)}\Rightarrow x = y.
  \end{equation*}
\end{definition}
\begin{definition}\label{D:lr-cat}
  An \emph{$\ell r$-category} is a partial $\ell r$-semigroup.
\end{definition}
In $\ell r$-categories, the $\ell r$-semigroup axioms reduce further
to $D^x_y \Leftrightarrow r\, x = \ell\, y$ and 
\begin{gather*}
r\, (\ell\, x)  = \ell\, x,\qquad \ell\, (r\, x) = r\, x,\\
  \ell\, x \cdot x = x,\qquad x\cdot r\, x = x,\qquad r\, x = \ell\, y
  \Rightarrow \ell\, (x\cdot y) = \ell\, x,\qquad
  r\, x = \ell\, y \Rightarrow r\, (x\cdot y) = r\, y,\\
  r\, x = \ell\, y \land r\, y= \ell\, z \Rightarrow (x\cdot y) \cdot
  z = x \cdot (y \cdot z).
\end{gather*}
These are essentially MacLane's axioms. A \emph{morphism of relational
  $\ell r$-semirings} is a relational morphism $f:X\to X'$ between
$\ell r$-semirings $X$ and $X'$ that preserves $\ell$ and $r$, that
is, $f\circ \ell = \ell'\circ f$ and $f\circ r= r'\circ f$.

\begin{proposition}
  The categories of object-free categories and $\ell r$-categories,
  both with (bounded)  morphisms, are isomorphic.
\end{proposition}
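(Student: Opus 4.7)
The plan is to construct mutually inverse functors $F$ and $G$ between the two categories, mirroring the approach in Proposition~\ref{P:of-cat-iso}. On objects, $F$ sends $(X,R)$ to $(X,R,\ell,r)$, where $\ell\, x = \The e\in E.\ R^x_{ex}$ and $r\, x = \The e\in E.\ R^x_{xe}$; these definite descriptions are well-defined because each element of a relational monoid has a unique left and a unique right unit. To check the $\ell r$-semigroup axioms of Definition~\ref{D:lr-semig}, I would note that properties (1), (2) and (4) of Lemma~\ref{P:rel-props3} give the first three algebraic conditions, the biconditional $D^x_y \Leftrightarrow r\, x = \ell\, y$ is the corollary of coherence noted after Lemma~\ref{P:coherent-rel-monoid}, and the two implications $R^y_{(\ell x)x} \Rightarrow x=y$ and $R^y_{x(rx)} \Rightarrow x=y$ reduce to weak functionality of $R$ applied to the triples $R^x_{(\ell x)x}$ and $R^x_{x(rx)}$ supplied by Lemma~\ref{P:rel-props3}(2).

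The functor $G$ simply forgets $\ell$ and $r$. To see that the resulting $(X,R)$ is a coherent partial monoid, I would check that $\ell\, x$ and $r\, x$ are always relational units: for example, $\ell\, x$ is a left unit because $R^x_{(\ell x)x}$ supplies the existence clause, while $R^y_{(\ell x)z}$ gives $D^{\ell x}_z$, hence $\ell\, x = r(\ell\, x) = \ell\, z$, and therefore $y = \ell\, x \cdot z = \ell\, z \cdot z = z$. Coherence then follows from the biconditional $D^x_y \Leftrightarrow r\, x = \ell\, y$ together with the axiom $r(x\cdot y) = r\, y$. Mutual inverseness on objects is routine: $G\circ F$ clearly returns $(X,R)$, and in $F\circ G$ the recomputed source and target agree with the originals because each $\ell\, x$ in the given $\ell r$-category already lies in $E$ and is the unique element of $E$ with $R^x_{(\ell x)x}$.

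The main obstacle is the morphism correspondence. Every $\ell r$-morphism is a relational morphism by definition, so this direction is immediate, and boundedness transfers automatically because it is a condition on the underlying relation alone. The non-trivial direction is showing that every (bounded) relational morphism $f$ between object-free categories automatically satisfies $f\circ \ell = \ell'\circ f$ and $f\circ r = r'\circ f$. My plan is to apply the morphism condition to $R^x_{(\ell x)x}$, yielding $R'^{fx}_{f(\ell x)\,fx}$, and then apply boundedness to the triple $R'^{fx}_{\ell'(fx)\,fx}$ (which holds a priori in $X'$) to lift it to an equation $y\cdot z = x$ in $X$ with $fy = \ell'(fx)$ and $fz = fx$. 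Lemma~\ref{P:rel-props3}(4) then yields $\ell\, x = \ell\, y$, and one argues from the unit properties in $X'$ and the fact that both $f(\ell x)$ and $\ell'(fx)$ act as left identities for $fx$ that they must coincide. Preservation of $r$ is symmetric, and the remaining bookkeeping is routine.
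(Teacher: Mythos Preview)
Your overall plan matches the paper's: construct mutually inverse functors, handle objects via Lemma~\ref{P:rel-props3} and the coherence characterisation after Lemma~\ref{P:coherent-rel-monoid}, and reduce the morphism correspondence to showing that a (bounded) relational morphism automatically preserves $\ell$ and $r$. On objects your treatment is essentially the paper's; on morphisms the paper is in fact terser than you, jumping directly from $R'^{fx}_{(fe)(fx)}$ to $fe=\ell'(fx)$ without further comment.

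There is, however, a genuine gap in your final step. You conclude that $f(\ell x)$ and $\ell'(fx)$ coincide because both ``act as left identities for $fx$'', but that inference fails: in any monoid containing a non-unit idempotent $a$, both the unit and $a$ satisfy $u\cdot a=a$, so acting as a left identity for a fixed element does not determine $u$. The repair is to stay with the element $y$ that boundedness gave you. You have $fy=\ell'(fx)\in E'$ and, by Lemma~\ref{P:rel-props3}(4), $\ell y=\ell x$. Apply the morphism condition to $R^y_{(\ell y)y}$ to obtain $R'^{fy}_{f(\ell y)\,fy}$. Now use that $fy$ is itself a \emph{unit}: if $u\cdot e'=e'$ with $e'\in E'$, then $D'^{u}_{e'}$ forces $r'(u)=\ell'(e')=e'$, whence $u\cdot e'=u\cdot r'(u)=u$ by Lemma~\ref{P:rel-props3}(2), so $u=e'$. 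Hence $f(\ell x)=f(\ell y)=fy=\ell'(fx)$, and $r$-preservation is symmetric. Note that your explicit appeal to boundedness is not incidental: for plain relational morphisms the conclusion is false (map the one-element monoid to a non-unit idempotent in a two-element monoid), so only the bounded half of the statement actually holds.
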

\begin{proof}
  For objects, it is routine to check that every
  $\ell r$-category is an object-free category and vice versa, using
  that $X_\ell = \{x\mid \ell\, x = x\} = \{x\mid r\, x = x\} = X_r$
  in relational $\ell r$-semigroups $X$ serves as $E$ in relational
  monoids. For morphisms, we know that bounded partial semigroup
  morphisms and their algebraic counterparts are the same. It
  remains to check preservation of $\ell$ and $r$ and preservation of
  units. Let $e=\The y\in E.\ R^x_{yx}$ and $f:X\to X'$ be a bounded
  relational monoid morphism. Then $f\, (\ell\, x) = f\, e$, hence
  ${R'}^{f\, x}_{(f\, e)(f\, x)}$ and $f\, e = f\, (\ell'\, x)$, which
  proves $\ell$-preservation.  The proof of $r$-preservation is
  similar. Conversely, let $f:X\to X'$ be a bounded relational
  $\ell r$-semigroup morphism and let $\ell\, x$ be the unit of
  $x$. Then $R^x_{(\ell\, x)x}$ and therefore
  ${R'}^{f\, x}_{(f\, (\ell\, x))(f\, x)} = {R'}^{f\, x}_{(\ell'\,
    (f\, x))(f\, x)}$, hence the unit $\ell\, x$ has been preserved.
\end{proof}
Analogous results hold already for coherent relational monoids and
$\ell r$-semigroups.

The standard way of axiomatising a category is to use a collection of
objects $O$ and morphisms $M$ with operations $\dom,\cod:M\to O$
associating a source and target object with each morphism, with an
operation $\id:O\to M$ associating a unit morphism with each object
and an operation of composition (pullback) $\circ$ of morphisms, such
that $f\circ g$ is defined whenever $\cod\, f = \dom\, g$.  The
following axioms hold:
\begin{enumerate}
\item $\dom\, (\id\, A) = A = \cod\, (\id\, A)$,
\item $\dom\, (x\circ y) = \dom\, x$ and $\cod\, (x\circ y) = \cod\, y$ whenever
  $\cod\, x = \dom\, y$,
\item $(x\circ y) \circ z = x\circ (y\circ z)$ whenever $\cod\, x = \dom\,
  y$ and $\cod\, y=\dom\, z$,
\item $\id\, (\dom\, x)\circ x = x$ and $x\circ
  \id\, (\cod\, x) = x$.
\end{enumerate}

\begin{proposition}\label{P:cat-lrcat}
  Every small category is an $\ell r$-category and vice versa.
\end{proposition}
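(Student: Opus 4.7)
The plan is to exhibit two mutually inverse constructions between small categories and $\ell r$-categories, and to verify the axioms on each side. For the forward direction, given a small category $(O, M, \dom, \cod, \id, \circ)$, I would take $X = M$ as the carrier, define $\ell\, x = \id\,(\dom\, x)$ and $r\, x = \id\,(\cod\, x)$, and set $R^x_{yz} \Leftrightarrow \cod\, y = \dom\, z \wedge x = y \circ z$. I would then check the reduced $\ell r$-category axioms listed after Definition~\ref{D:lr-cat}: the identities $r\,(\ell\, x) = \ell\, x$ and $\ell\,(r\, x) = r\, x$ follow from axiom (1) of categories, the unit laws $\ell\, x \cdot x = x$ and $x \cdot r\, x = x$ from axiom (4), the source/target laws for composites from axiom (2), and associativity from axiom (3). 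Functionality of $R$ follows because $\circ$ is a function on its domain of definition, and the definedness condition $D^x_y \Leftrightarrow r\, x = \ell\, y$ follows because $\id$ is injective, so $\id\,(\cod\, x) = \id\,(\dom\, y) \Leftrightarrow \cod\, x = \dom\, y$.

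For the reverse direction, given an $\ell r$-category $(X, R, \ell, r)$, I would take $M = X$ and define $O = X_\ell = \{x \in X : \ell\, x = x\}$, which by the remark in the preceding proof coincides with $X_r$, and set $\dom = \ell$, $\cod = r$, with $\id: O \hookrightarrow M$ the inclusion, and $\circ$ the partial composition extracted from $R$ via $x \cdot y = \The z.\, R^z_{xy}$. Axiom (1) is then the restriction of $r\,(\ell\, x) = \ell\, x$ and $\ell\,(r\, x) = r\, x$ to fixed points; axioms (2)--(4) translate directly from the corresponding reduced $\ell r$-category laws, once one observes that $\cod\, x = \dom\, y$ in the constructed category means $r\, x = \ell\, y$, which is exactly the $\ell r$-semigroup definedness predicate $D^x_y$.

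Finally, I would verify that the two constructions are mutually inverse. Starting from a category, applying the forward construction and then the backward one recovers the objects as the image of $\id$, which is isomorphic to $O$ via $A \mapsto \id\, A$ (using injectivity of $\id$ in a small category, which holds by axiom (1)); the other data match by construction. Starting from an $\ell r$-category, the round trip returns the same carrier $X$, and $\ell, r$ are recovered because $\ell\, x$ lies in $X_\ell$ by $r\,(\ell\, x) = \ell\, x$, so $\id\,(\dom\, x) = \ell\, x$ as required.

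The main obstacle, I expect, is bookkeeping rather than depth: one must be careful about the bijection between the object set $O$ of a small category and the set of identity arrows, which in an $\ell r$-category has no separate existence but is identified with $X_\ell = X_r$. Making this identification precise—so that ``every small category is an $\ell r$-category and vice versa'' becomes a genuine correspondence and not merely an equivalence—relies on the fact that $\id: O \to M$ is injective and that the image $\id(O)$ is exactly the set of fixed points of both $\dom$-composed-with-$\id$ and $\cod$-composed-with-$\id$, which one reads off from axiom (1).
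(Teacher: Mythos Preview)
Your proposal is correct and follows essentially the same approach as the paper: in one direction you set $\ell = \id \circ \dom$ and $r = \id \circ \cod$, and in the other you take $O = X_\ell$, $\dom = \ell$, $\cod = r$, and $\id$ the inclusion, exactly as the paper's proof sketch does. You supply more detail than the paper---in particular the axiom-by-axiom correspondence and the discussion of why the constructions are mutually inverse via the bijection $A \mapsto \id\, A$ between $O$ and $X_\ell$---but this is elaboration of the same argument rather than a different route.
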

\begin{proof}[Proof sketch]
  In any small category define $\ell = \id \circ \dom$ and
  $r=\id\circ \cod$.  It is then routine to check the
  $\ell r$-category axioms. In any $\ell r$-category $X$, let $M=X$
  and $O=X_\ell$. Let $\id$ be the identity map on $X$ restricted to
  the embedding $O\to M$. And let $\dom=\ell$ and $\cod=r$, restricted
  to functions of type $M\to O$. It is then routine to verify the
  category axioms. 
\end{proof}


\section{Adjunction of  Zero}\label{S:unit-collapse}

A \emph{relational semigroup with zero} is a relational semigroup with
an element $0$ that satisfies $R^0_{0x}$ and $R^x_{x0}$ for every
element $x$. Every relational monoid $X$ has one single relational
unit whenever $D=X\times X$, because then $D^e_{e'}$ holds for all
units $e$ and $e'$, but different units cannot be composed. Moreover,
in every relational semigroup $X$ with zero, $D=X\times X$, because
$D^x_0$ and $D^y_0$ hold and imply $D^x_y$ by relational associativity
($R^0_{x0}\land R^0_{y0}\Leftrightarrow R^0_{00}\land R^0_{xy}$). Thus
every partial monoid with zero, and a fortiori every object-free
category is a monoid.  Adjoining a zero to a partial monoid therefore
makes multiple units disappear. The construction yields a total
semigroup in which the elements formerly known as units satisfy weaker
properties.

An element $e$ is a \emph{weak left unit} of a ternary relation $R$ if
$\exists x.\ R^x_{ex}$ and $\forall x.\ R^x_{ex}\vee \neg D^e_x$. It
is a \emph{weak right unit} if $\exists x.\ R^x_{xe}$ and
$\forall x.\ R^x_{xe}\vee \neg D^x_e$. In a relational semigroup $X_0$
with zero adjoined, $\neg (D_0)^x_y\Leftrightarrow (R_0)^0_{xy}$,
hence the second conditions for left and right units become
$\forall x.\ (R_0 )^x_{ex}\vee (R_0)^0_{ex}$ and
$\forall x.\ (R_0 )^x_{xe}\vee (R_0)^0_{ex}$.  A relational semigroup
with zero is a \emph{weak relational monoid} with zero if for every
$x$ there are weak units $e$ and $e'$ such that $R^x_{ex}$ and
$R^x_{xe'}$. The notion of \emph{weak object-free category} is defined
by analogy.

\begin{lemma}
 Every object-free category can be embedded into a weak coherent
 partial monoid with zero. 
\end{lemma}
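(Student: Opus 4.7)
The plan is to take the object-free category $(X,R)$ and form its zero adjunction $(X_0,R_0)$ as described in Section~\ref{S:preliminaries}, then verify that the resulting structure is a weak coherent partial monoid with zero and that the canonical inclusion $\iota : X \hookrightarrow X_0$ is an embedding.

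First I would deal with the straightforward verifications. Weak functionality of $R_0$ is easy: the new triples added by the adjunction all have $0$ in the first slot and target pairs $(x,y)$ for which $D^x_y$ fails in $R$, so no original triple in $R$ with nonzero first entry can collide with them. Coherence of $R_0$ is automatic because $D_0$ is total, so the implication $R_0^v_{xy}\land D_0^y_z\Rightarrow D_0^v_z$ holds vacuously on the right. The original units $E\subseteq X$ become weak units of $R_0$: each $e\in E$ retains the existence clause $\exists x.\ R_0^x_{ex}$, and for the weakened uniqueness clause, any $x\in X_0$ falls into one of three cases --- $x\in X$ with $D^e_x$, $x\in X$ with $\neg D^e_x$, or $x=0$ --- which respectively yield $R_0^x_{ex}$ (inherited from $R$), $R_0^0_{ex}$ (from the adjunction), and $R_0^0_{e0}$ (from the adjunction). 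A symmetric analysis handles right units, and the weak monoid requirement is then witnessed for each $x\in X$ by $\so\, x$ and $\ta\, x$, and for $x=0$ by any member of $E$ since $0$ absorbs on both sides.

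The main obstacle is verifying relational associativity of $R_0$. I would proceed by case analysis on which of $u,x,y,z$ in $\exists v.\ R_0^u_{xv}\land R_0^v_{yz}\Leftrightarrow\exists v.\ R_0^u_{vz}\land R_0^v_{xy}$ are equal to $0$ and whether the relevant binary compositions are defined in $R$. When all four entries lie in $X$ and all compositions are defined, the claim reduces directly to relational associativity of $R$. The remaining cases use the absorption rules $R_0^0_{0x}$ and $R_0^0_{x0}$, which force a zero witness, together with the key translation $\neg D^x_y\Leftrightarrow R_0^0_{xy}$, which lets one transport undefinedness across the biconditional. The partial associativity axiom $D^x_y\land D^{x\cdot y}_z\Leftrightarrow D^x_{y\cdot z}\land D^y_z$ from Lemma~\ref{P:ahs-derivable} is precisely what is needed to make the definedness patterns on the two sides toggle in unison. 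Finally, $\iota$ is an embedding because it is injective and, since $R_0$ restricted to triples with entries in $X$ coincides with $R$, the inclusion both preserves and reflects the relation.
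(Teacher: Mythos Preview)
Your proposal is correct and in fact more complete than the paper's own proof, which concentrates almost entirely on the weak-unit clause and leaves weak functionality, coherence, associativity of $R_0$, and the embedding implicit. Where the arguments overlap, your treatment of weak units is the more economical one: you split on whether $D^e_x$ holds and use the left-unit implication $R^y_{ex}\Rightarrow y=x$ directly to get $(R_0)^x_{ex}$, whereas the paper instead splits on whether $e$ or some other $e'$ is the left unit of $x$ and then invokes relational associativity to derive the contradiction $D^e_{e'}$, only then concluding $\neg D^e_x$. Both reach the same disjunction $(R_0)^x_{ex}\lor (R_0)^0_{ex}$, but your route avoids the detour through associativity. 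Your identification of associativity of $R_0$ as the substantive remaining check, governed by the equivalence $D^x_y\wedge D^{x\cdot y}_z \Leftrightarrow D^x_{y\cdot z}\wedge D^y_z$ together with coherence, is exactly right; the paper simply does not spell this out.
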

\begin{proof}
  Suppose $e$ is a left (right) unit in the object-free category
  $(X,R)$.  We must show that it is a weak left (right) unit in the
  coherent relational semigroup with zero $(X_0,R_0)$.

 Suppose $\exists x.\ R^x_{ex}$ and
  $\forall x,y.\ R^y_{ex} \Rightarrow y = x$. Then obviously
  $\exists x.\ (R_0)^x_{ex}$ since in particular $(R_0)^0_{e0}$ after
  the adjunction. Let $x\in X_0$. If $x=0$, then $(R_0)^0_{ex}$ and
  the second claim holds. If $x\in X$, then either it has left unit
  $e$, in which case $(R_0)^x_{ex}$, or it has another left unit
  $e'\neq e$ in $X$, that is, $R^x_{e'x}$. Then suppose that
  $R^y_{ex}$ for some $y$.  Then there exists a $z$ such that
  $R^y_{zx}$ and $R^x_{ee'}$ by relational associativity, hence
  $D^e_{e'}$, a contradiction. It follows that $\neg D^e_x$ and
  therefore $(R_0)^0_{ex}$ by definition of $R_0$. Hence once again
  the claim holds and the proof is finished. The proof for right units
  is similar. 
\end{proof}

\begin{lemma}
Every weak coherent partial monoid $(X_0,R_0)$ with zero contains the
object-free category $(X,R)$ as a subalgebra.
\end{lemma}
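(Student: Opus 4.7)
The plan is to take $X = X_0 \setminus \{0\}$ and define $R$ as the restriction of $R_0$ to $X\times X\times X$, then verify each axiom of a coherent partial monoid for $(X,R)$. Weak functionality is inherited immediately from $R_0$. For relational associativity, the key observation is that if $u,x\in X$ and $(R_0)^u_{xv}$, then $v\neq 0$: otherwise $(R_0)^u_{x0}$ together with the zero-rule $(R_0)^0_{x0}$ and functionality would force $u=0$. The same argument rules out zero witnesses on either side of the $R_0$-associativity biconditional restricted to $u,x,y,z\in X$, so associativity descends to $R$.

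The more delicate step is promoting the weak units of $R_0$ into strong units of $R$. For $x\in X$ let $e\in X_0$ be a weak left unit with $(R_0)^x_{ex}$; functionality rules out $e=0$ (else $(R_0)^x_{0x}$ and $(R_0)^0_{0x}$ would give $x=0$). Now suppose $R^y_{ez}$ for $y,z\in X$. The weak-unit clause $(R_0)^z_{ez}\vee (R_0)^0_{ez}$ together with $(R_0)^y_{ez}$ and functionality forbids the second disjunct, since it would yield $y=0$; hence $(R_0)^z_{ez}$ holds and a final appeal to functionality gives $y=z$. This is precisely the strong left-unit condition for $R$, and right units are handled symmetrically.

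Coherence of $R$ transfers by restriction from the coherence of $R_0$, provided $D$ is read in the ``product is non-zero'' sense that is consistent throughout the paper: for $v,x,y,z\in X$, $R^v_{xy}\land D^y_z$ is just the non-zero form of the corresponding $R_0$-hypothesis, and yields $D^v_z$ in $R$ by the same reading. I expect the main obstacle to be the unit step, where one must carefully navigate between the weak condition in $R_0$ and the strong condition in $R$ and use functionality to exclude the zero branch whenever the surrounding points lie in $X$; the remaining work is essentially bookkeeping to ensure that non-zero data in $R_0$ cannot be witnessed by zero.
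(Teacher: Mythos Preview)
Your proof is correct and follows the same essential idea as the paper's---using weak functionality of $R_0$ to rule out the zero branch whenever the surrounding arguments lie in $X$. The paper's own proof is in fact terser than yours: it treats only the promotion of weak units to strong units (via exactly your ``either $(R_0)^z_{ez}$ or $(R_0)^0_{ez}$, and functionality excludes the latter'' argument) and leaves associativity, closure, and coherence implicit, whereas you spell these out.
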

\begin{proof}
  Suppose $e$ is a weak left (right) unit of a semigroup $(S_0,R_0)$
  with zero that is not a unit of $0$. We must show that $e$ is a left
  (right) unit of the partial semigroup $(S,R)$.

  Suppose $\exists x.\ (R_0)^x_{ex}$ and
  $\forall x.\ (R_0)^x_{ex}\vee (R_0)^0_{ex}$. Suppose $x\neq 0$ and
  ${R_0}^x_{ex}$. Then clearly $R^x_{ex}$.  Now let $R^y_{ex}$. Then
  $x=y$ because either $R^x_{ex}$ or $\neg D^e_x$.
\end{proof}

Similar results hold for relational monoids and weak relational
monoids. Example~\ref{ex:weak-units} below shows how several units in a
relational monoid turn into weak units after adjoining a zero.

The adjunction of zero and the resulting collapse can also be
described as follows. Relational monoids over $X$ embed into powerset
quantales in $\Pow\, X$ that are based on complete atomic boolean
algebras by defining the complex product
$A \cdot B = \left\{x\mid \exists y\in A. \exists z\in B.\
  R^x_{yz}\right\}$.
There is a simple Stone-type duality between categories of atomic
quantales and categories of ternary relations~\cite{JonssonT51}.  The
atom structure of the quantale defines the associated ternary
relation; the elements of the ternary relation embed as atoms into the
power set quantale via $\eta\, x = \{x\}$ for all $x\in X$.  In fact,
$\eta$ is an isomorphism, because
$R^x_{yz} \Leftrightarrow \{x\}\subseteq \{y\}\cdot \{z\}
\Leftrightarrow R^{\eta\, x}_{(\eta\, y)(\eta\, z)}$.

If $X$ is a partial monoid, then $\{x\cdot y\} = \{x\}\cdot \{y\}$ for
all $x,y\in X$.  The monoid $(\Pow\, X, \cdot, E)$ thus has the
semigroup with zero $((\Pow\, \eta)\, X,\cdot,\emptyset)$ as a
subsemigroup. It is isomorphic to the semigroup with zero
$(X_0,\cdot_0)$ via $\eta_0$, which extends $\eta$ by
$\eta_0: 0\mapsto \emptyset$.  The elements of $E$ are therefore weak
units in $(\Pow\, \eta)\, X$. The set $E$ is of course not in general
a unit set of this semigroup.

The situation for $\ell r$-semigroups and $\ell r$-categories is much
simpler (see Example~\ref{ex:no-collapse} below).  In fact,
$\ell r$-categories with a zero adjoined have already been introduced by
Schweizer and Sklar. 

\begin{definition}[\cite{SchweizerS67}]\label{D:categorical-semigroup}
  A \emph{categorical semigroup} is a semigroup $(S,\cdot)$ equipped
  with an element $0$ and two functions $\ell,r:S\to S$ that satisfy
  laws (1) and (2) of Lemma~\ref{P:rel-props3} and, for all
  $x,y\in S$,
  \begin{gather*}
r\, 0 = 0\qquad\text{ and } \qquad x\cdot y \neq 0 \Leftrightarrow x\neq 0\land y\neq
0 \land r\, x = \ell\,  y. 
  \end{gather*}
\end{definition}

\begin{proposition}\label{P:LR2cs}
  Every $\ell r$-category $X$ can be embedded into a categorical
  semigroup $X\cup\{0\}$ by adjoining a zero and extending $\ell$ and
  $r$ as $\ell\, 0 = 0$ and $r\, 0=0$.  Every categorical semigroup
  $S$ is an $\ell r$-category with
  $D^x_y\Leftrightarrow r\, x = \ell\, y$. The algebra $S-\{0\}$ forms
  a sub-$\ell r$-category.
\end{proposition}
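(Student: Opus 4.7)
The plan is to treat the two directions as essentially inverse constructions, checking each against the axioms of Definition~\ref{D:categorical-semigroup} and the reduced $\ell r$-category axioms listed just after Definition~\ref{D:lr-cat}.

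For the first direction, I would extend the partial product on $X$ to a total binary operation on $X\cup\{0\}$ by sending all undefined products, and any product involving $0$, to $0$; extend $\ell$ and $r$ by $\ell\, 0 = r\, 0 = 0$. Laws (1) and (2) of Lemma~\ref{P:rel-props3} transfer directly, with the $x=0$ case handled by $\ell\, 0 = r\, 0 = 0$. The defining equivalence $x \cdot y \neq 0 \Leftrightarrow x\neq 0 \land y\neq 0 \land r\, x = \ell\, y$ holds by construction, since the only way $x\cdot y$ can land in $X$ is when $x,y\in X$ satisfy $D^x_y$, that is $r\, x = \ell\, y$. The only delicate point is total associativity: when any of $x,y,z$ is $0$ both sides collapse to $0$; when all three lie in $X$ and both intermediate products are defined, the $\ell r$-category provides associativity; and when one intermediate product is undefined, law (4) of Lemma~\ref{P:rel-props3} propagates undefinedness so that the other side is also $0$. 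The inclusion $X\hookrightarrow X\cup\{0\}$ is then evidently a structure-preserving embedding.

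For the second direction I would define $D^x_y\Leftrightarrow r\, x = \ell\, y$ on a categorical semigroup $S$ and retain the semigroup product on such pairs. Laws (1) and (2) are part of the definition of a categorical semigroup, and associativity when composable follows from associativity of the total semigroup. The axioms requiring actual work are $\ell(x\cdot y) = \ell\, x$ and $r(x\cdot y) = r\, y$ when $r\, x = \ell\, y$. These follow from a short calculation: $(\ell\, x)\cdot (x\cdot y) = ((\ell\, x)\cdot x)\cdot y = x\cdot y\neq 0$, so by the defining equivalence of a categorical semigroup $r(\ell\, x) = \ell(x\cdot y)$, and then law (1) rewrites the left side as $\ell\, x$; the claim for $r$ is symmetric.

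For the subalgebra claim, I would observe that $\ell$ and $r$ preserve nonzero elements because $\ell\, x\cdot x = x$ and $x\cdot r\, x = x$ force $\ell\, x, r\, x\neq 0$ whenever $x\neq 0$, while the defining equivalence of a categorical semigroup ensures the product of two nonzero elements with $r\, x = \ell\, y$ is again nonzero. Hence $S-\{0\}$ is closed under $\ell$, $r$, and the partial composition, inheriting the $\ell r$-category structure from $S$. I expect the main obstacle to be the case analysis establishing total associativity in the first direction, where one must invoke law (4) of Lemma~\ref{P:rel-props3} to make undefinedness propagate consistently; the derivation of laws (4) in the second direction is the other place where real work is needed, and everything else amounts to unpacking definitions.
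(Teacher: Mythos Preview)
Your proposal is correct. The paper states Proposition~\ref{P:LR2cs} without proof, so there is no argument to compare against; your outline supplies exactly the routine verification the authors omit, and your identification of total associativity in $X\cup\{0\}$ and law~(4) in the converse as the two nontrivial points is accurate.

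One small gap worth patching: in the second direction your derivation of $\ell(x\cdot y)=\ell\, x$ invokes $x\cdot y\neq 0$, which by the defining equivalence requires $x,y\neq 0$. You should note separately that if $r\, x=\ell\, y$ and one of $x,y$ equals $0$ then both do, since $r\, x=0$ gives $x=x\cdot r\, x=x\cdot 0=0$ (and dually $\ell\, y=0\Rightarrow y=0$); law~(4) is then trivial. With that case handled, the argument is complete.
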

Categorical semigroups are therefore totalisations of
$\ell r$-semigroups. 


\section{Examples}\label{S:examples}

\begin{lemma}\label{ex:impartial}
  The class of partial monoids is strictly contained in the class of
  relational monoids.
\end{lemma}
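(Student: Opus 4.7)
The plan is straightforward. The forward inclusion is immediate from the definitions: a partial operation is precisely a weakly functional ternary relation, so every partial monoid is automatically a relational monoid. Hence all the work lies in producing a relational monoid whose underlying relation is not weakly functional.

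For this separating example I would take the smallest nontrivial candidate, $X = \{e, a\}$, with $R \subseteq X \times X \times X$ consisting of the five triples
\[
(e,e,e),\quad (a,e,a),\quad (a,a,e),\quad (a,a,a),\quad (e,a,a),
\]
writing $(u,y,z)\in R$ for $R^u_{yz}$. Informally, $e$ is a two-sided identity while $a\cdot a$ is allowed to take either of the two values $a$ and $e$. The failure of weak functionality is witnessed immediately by $R^a_{aa}$ and $R^e_{aa}$, so $(X, R)$ is not a partial monoid.

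It remains to verify that $(X,R)$ really is a relational monoid. I would check, in order: (i) that $e$ is a relational unit, which follows because the triples in $R$ with $e$ as middle (respectively, right) coordinate are $(e,e,e)$ and $(a,e,a)$ (respectively, $(e,e,e)$ and $(a,a,e)$), so the uniqueness clause in the definition of a unit holds on both sides; (ii) that each element composes with $e$ on the left and on the right, yielding the defining condition of a relational monoid; and (iii) relational associativity, to be tested over the eight shapes of $(x,y,z) \in X^3$.

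The main, and essentially only, obstacle is the bookkeeping in (iii). The interesting cases are those in which a sub-product $a\cdot a$ occurs, since this is where $R$ is multi-valued; the remaining cases are completely determined by the unit data. In each interesting case both sides of the associativity biconditional must be evaluated with the auxiliary variable $v$ ranging over the two witnesses $v\in\{a,e\}$ coming from $R^{a}_{aa}$ and $R^{e}_{aa}$, but a short enumeration shows that the two sides produce the same set of admissible $u$ throughout. This establishes the strict containment.
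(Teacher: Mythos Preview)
Your proposal is correct. The two-element structure $(\{e,a\},R)$ you describe is indeed a relational monoid that is not weakly functional; the associativity check you sketch goes through in all eight cases (the only nontrivial ones are those involving $a\cdot a$, and there both sides of the biconditional yield the full set $\{e,a\}$ of admissible $u$).

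Your route differs from the paper's, however. The paper exhibits the shuffle monoid $(\Sigma^*,\bowtie,\varepsilon)$ on words over an alphabet $\Sigma$, with $R^x_{yz}\Leftrightarrow x\in y\bowtie z$; weak functionality fails because $ab,ba\in a\bowtie b$. That example is infinite but conceptually lightweight: associativity of shuffle is a standard fact, so no case analysis is needed, and the example connects the lemma to a natural structure arising in formal languages. Your example, by contrast, is a minimal finite witness---essentially the smallest relational monoid that is not a partial monoid---and is entirely self-contained, at the cost of an explicit (if short) verification of relational associativity. Both approaches are valid; yours is more elementary, the paper's more structural.
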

\begin{proof} 
Let $a,b$ be letters of some alphabet $\Sigma$ and let
$\mathop{\bowtie}: \Sigma^* \times \Sigma^* \to \Pow\, \Sigma^*$ denote
  the shuffle operation of two words---an example of a
  multioperation. Define $R^x_{yz}\Leftrightarrow x \in y\bowtie z$.
  Obviously $ab,ba\in a \bowtie b$, but $ab\neq ba$. The shuffle 
  monoid $(\Sigma^*, \mathop{\bowtie}, \varepsilon)$ is a total relational monoid and therefore coherent. It has
  one single unit: the empty word $\varepsilon$.
\end{proof}

\begin{lemma}\label{ex:mult-units}
  Relational monoids may have multiple units.
\end{lemma}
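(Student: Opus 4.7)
The plan is to exhibit an explicit relational monoid in which more than one element qualifies as a unit. The simplest construction is the \emph{discrete} object-free category on a set $X$ with $|X|\ge 2$: take the ternary relation $R\subseteq X\times X\times X$ defined by $R^x_{yz} \Leftrightarrow x = y = z$. This is essentially the relational monoid corresponding, via Proposition~\ref{P:of-cat-iso} and Proposition~\ref{P:cat-lrcat}, to the small category whose objects are the elements of $X$ and whose only morphisms are identities.

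Concretely, I would verify three things. First, that $(X,R)$ is a relational semigroup: both sides of the relational associativity biconditional
\begin{equation*}
\exists v.\ R^u_{xv}\land R^v_{yz} \Leftrightarrow \exists v.\ R^u_{vz}\land R^v_{xy}
\end{equation*}
collapse to the single condition $u=x=y=z$, so associativity holds trivially. Second, that $(X,R)$ is a relational monoid: for each $x\in X$ the element $x$ itself lies in $E$, because $R^x_{xx}$ witnesses $\exists w.\ R^w_{xx}$ and any triple $R^y_{xw}$ or $R^y_{wx}$ forces $y=w$ by the defining property of $R$. Thus every element is simultaneously a left and a right unit, and in particular each element $x$ has left and right units (namely $x$) satisfying $D^x_x$.

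Finally, I would note that this immediately shows $|E|=|X|\ge 2$, so the relational monoid has multiple units, proving the lemma. (In fact, $R$ is coherent and functional, so this is even an object-free category, illustrating that the phenomenon of multiple units appears already in the most structured case.)

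The main obstacle is essentially nil: there is no deep obstruction, only a choice of a sufficiently clean example. One could equivalently invoke any small category with at least two objects through Propositions~\ref{P:of-cat-iso} and~\ref{P:cat-lrcat}, but the direct discrete construction avoids any appeal to the translation machinery and keeps the verification wholly self-contained.
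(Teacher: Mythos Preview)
Your proposal is correct and is essentially the same example as the paper's: the paper takes the two-element set $X=\{e,e'\}$ with only $R^e_{ee}$ and $R^{e'}_{e'e'}$, which is precisely your discrete relation $R^x_{yz}\Leftrightarrow x=y=z$ specialised to $|X|=2$. Your write-up is simply a more detailed and slightly more general version of the same construction.
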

\begin{proof}
  The set $X=\{e,e'\}$ with $R^e_{ee}$ and $R^{e'}_{e'e'}$ fixed forms
  a relational monoid with units $e\neq e'$.
\end{proof}


\begin{lemma}\label{ex:incoherent}
  The class of coherent relational monoids is strictly contained in
  that of relational monoids.
\end{lemma}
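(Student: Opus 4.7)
My plan is to exhibit one explicit relational monoid that fails coherence; the reverse inclusion is immediate from the definitions, since a coherent relational monoid is by definition a relational monoid. By Lemma~\ref{P:coherent-rel-monoid}, coherence is equivalent to $\ta\, x = \so\, y \Rightarrow D^x_y$, so the task reduces to producing an element $a$ whose left and right units both equal some unit $e$ while $\neg D^a_a$.

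The simplest candidate I would try is $X = \{e, a\}$ with $R$ consisting of exactly the three triples $R^e_{ee}$, $R^a_{ea}$, $R^a_{ae}$ and nothing else. First I would verify that $R$ is a partial operation and that $e$ is its unique unit: the listed triples supply the required existence witnesses, and for each $x \in X$ the only $y$ with $R^y_{ex}$ (respectively with $R^y_{xe}$) is $x$ itself. Every element composes with $e$ on both sides, so the relational monoid axioms are satisfied modulo associativity.

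The main labour is verifying relational associativity, which I would dispose of by a direct case split over $(u, x, y, z) \in X^4$. The structural observation driving the check is that $a$ only ever composes with $e$, so any chain $R^u_{xv} \wedge R^v_{yz}$ (and symmetrically $R^u_{vz} \wedge R^v_{xy}$) forces the ``middle'' element $v$ to be $e$ whenever something nontrivial happens, making both sides of the biconditional match; in all remaining cases both existentials are vacuously false. Once relational associativity is in hand, incoherence follows at once: since $\so\, a = \ta\, a = e$, Lemma~\ref{P:coherent-rel-monoid} would force $D^a_a$, yet $R$ contains no triple of the form $R^w_{aa}$ by construction.

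The only real obstacle is the tedium of the sixteen-case split for associativity, which is entirely mechanical; conceptually the example is nothing more than a one-object quiver with a single non-identity loop $a$ for which composition of $a$ with itself has simply been refused.
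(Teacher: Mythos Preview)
Your proposal is correct and uses exactly the same two-element example $(\{e,a\},R)$ with $R^e_{ee}$, $R^a_{ea}$, $R^a_{ae}$ that the paper gives. You supply more detail than the paper (the associativity case split and the appeal to Lemma~\ref{P:coherent-rel-monoid} rather than directly instantiating the coherence implication), but the argument is the same.
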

\begin{proof}
  The relational monoid $\left(\{a,e\},R\right)$ with $R^e_{ee}$,
  $R^a_{ea}$ and $R^a_{ae}$ and unit $e$ satisfies $D^e_a$, but not
  $D^a_a$. 
\end{proof}

\begin{lemma}\label{ex:pmon-incoherent}
The class of object-free categories is strictly contained in that of
partial monoids. 
\end{lemma}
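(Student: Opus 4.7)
The plan is to reuse the example from Lemma~\ref{ex:incoherent} and observe that it is already a partial monoid, not merely a relational monoid. The forward inclusion is immediate from the definition: an object-free category is by definition a coherent partial monoid, so in particular a partial monoid. To show the inclusion is strict, I would exhibit a partial monoid that is not coherent.

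Concretely, I would take $X = \{e,a\}$ with $R^e_{ee}$, $R^a_{ea}$, $R^a_{ae}$, and no other triples in $R$. The first step is to check that $R$ is weakly functional: for each pair $(y,z)\in X\times X$, inspecting the three listed triples shows that at most one $x$ makes $R^x_{yz}$ hold (and the pair $(a,a)$ is undefined). So $R$ is a partial operation, which places us in the partial (rather than merely relational) setting.

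Next I would verify relational associativity. Since the only defined compositions involve at least one copy of the unit $e$, the only nontrivial instances of $\exists v.\ R^u_{xv}\land R^v_{yz} \Leftrightarrow \exists v.\ R^u_{vz}\land R^v_{xy}$ to check are those where $\{x,y,z\}$ contains at most one $a$; in each such case, both sides reduce to the obvious identity $u = a$ or $u=e$ by inserting $v=e$. The pair $(a,a)$ is undefined, so no triple of the form $(a,a,\cdot)$ or $(\cdot,a,a)$ contributes. Then I would check that $e$ serves as a unit for both $e$ and $a$ from either side, so that $(X,R)$ is a partial monoid.

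Finally, the very content of Lemma~\ref{ex:incoherent} is that this structure is not coherent: $D^e_a$ holds and $D^a_e$ holds, yet $D^a_a$ fails, which directly violates the coherence implication $R^v_{xy}\land D^y_z \Rightarrow D^v_z$ by taking $v=a$, $x=a$, $y=e$, $z=a$. Hence $(X,R)$ is a partial monoid that is not an object-free category, which establishes strictness. The main obstacle, such as it is, is simply the bookkeeping for associativity on the four-element set of triples; there is no conceptual difficulty.
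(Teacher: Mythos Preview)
Your proposal is correct and coincides with the paper's own first counterexample: the structure on $\{e,a\}$ you describe is exactly the paper's $\{a,1\}$ example (with $1$ playing the role of your $e$), and your direct check of the coherence axiom is equivalent to the paper's verification via $\ell,r$ and Lemma~\ref{P:coherent-rel-monoid}. The paper additionally supplies a second, more ``natural'' counterexample built from partial endofunctions under disjoint union, but this is supplementary and your single example already suffices.
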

\begin{proof}
We present two examples of non-coherent partial monoids. 
  \begin{enumerate}
  \item The set $\{a,1\}$ with $D=\left\{(1,1),(a,1),(1,a)\right\}$ and partial
    composition $1\cdot 1 = 1$ and $1\cdot a = a = a\cdot 1$ is a
    partial monoid. As a single-object category it should be a
    monoid---but it is not. The functions $\ell$ and $r$ are forced to
    be $\ell\, a = \ell\, 1 = 1=r\, a = r\, 1$, but
    $r\, a = \ell\, a$ and $\neg D^a_a$ violate coherence
    (Lemma~\ref{P:coherent-rel-monoid}).
    
  \item It is easy to check that partial endofunctions on a set $X$
    form a partial monoid with composition of two partial functions
    $f$ and $g$ defined as $f\cup g$ whenever
    $\dom\, f \cap \dom\, g = \emptyset$ and with the empty partial
    function $\varepsilon$, which satisfies
    $\dom\, \varepsilon = \emptyset$, as its only unit. It is
    straighforward to pick partial functions $f$, $g$, $h$ that
    satisfy
    $\dom\, f \cap \dom\, g = \emptyset = \dom\, g\cap \dom\, h$ and
    $\dom\, f \cap \dom\, h\neq \emptyset$. Then
    $\dom\, (f\cup g) \cap \dom h \neq \emptyset$, and therefore
    $D^f_g$ and $D^g_h$, but not $D^{f\cup g}_h$, which violates
    coherence.\qedhere
  \end{enumerate}
\end{proof}

\begin{lemma}\label{ex:weak-units}
Relational semigroups with zero 
may have several week units.
\end{lemma}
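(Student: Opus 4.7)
The plan is to produce a small concrete example by adjoining a zero to the two-unit relational monoid from Lemma~\ref{ex:mult-units}. Starting from $X = \{e, e'\}$ with only $R^e_{ee}$ and $R^{e'}_{e'e'}$, I would form $X_0 = \{e, e', 0\}$ and extend $R$ to $R_0$ using the construction from Section~\ref{S:preliminaries}. This adds the triples $(R_0)^0_{ee'}$ and $(R_0)^0_{e'e}$, since $\neg D^e_{e'}$ and $\neg D^{e'}_e$ in the original $R$, together with the absorbing triples $(R_0)^0_{0x}$ and $(R_0)^0_{x0}$ for every $x\in X_0$. A routine inspection confirms that $(X_0,R_0)$ is a relational semigroup with zero: relational associativity on the adjoined zero-triples is immediate because the left-hand and right-hand witnesses can both be taken to be $0$, and on the original triples it simply reduces to the associativity already present in $(X,R)$.

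I would then argue that both $e$ and $e'$ remain \emph{weak} left units (and, symmetrically, weak right units) of $(X_0,R_0)$. For $e$, the existence clause is witnessed by $x=e$ via $(R_0)^e_{ee}$. For the universal clause $\forall x.\ (R_0)^x_{ex} \vee (R_0)^0_{ex}$, one simply checks the three cases: $x=e$ gives the first disjunct; $x=e'$ gives the second, since $\neg D^e_{e'}$ in $R$ forces $(R_0)^0_{ee'}$; and $x=0$ gives the second via the absorbing triple $(R_0)^0_{e0}$. The analogous verification for $e'$ is symmetric, and the right-unit conditions are checked in the same way using $(R_0)^0_{e'e}$ and $(R_0)^0_{0e}$. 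Hence $(X_0,R_0)$ admits (at least) two distinct weak units, namely $e$ and $e'$.

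There is essentially no obstacle here beyond bookkeeping. The conceptual point worth highlighting is that the weak unit axiom is deliberately calibrated so that the failure of $D^e_x$ in the original relational monoid gets absorbed into the zero triple $(R_0)^0_{ex}$ after the adjunction; this is precisely the mechanism by which the multiple units of $(X,R)$ survive the collapse to a single (ordinary) unit in the totalised semigroup, and it is what the example is designed to exhibit in the simplest possible setting.
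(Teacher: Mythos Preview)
Your proposal is correct and follows essentially the same route as the paper: both adjoin a zero to the two-element relational monoid of Lemma~\ref{ex:mult-units} and verify that $e$ and $e'$ become weak units in $(X_0,R_0)$. The paper additionally remarks that $e$ and $e'$ cease to be ordinary units after the adjunction (via $(R_0)^0_{ee'}$ with $e'\neq 0$), but this is not needed for the bare statement and your case-by-case check of the weak-unit clauses is entirely adequate.
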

\begin{proof}
  Consider again Example~\ref{ex:mult-units}.  Adjoining a zero forces
  $R^0_{e0}$, $R^0_{0e}$, $R^0_{e'0}$, $R^0_{0e'}$, $R^0_{00}$,
  $R^0_{ee'}$ $R^0_{e'e}$. Now $e$ and $e'$ are no longer units of
  $R_0$. For instance, $e$ is no longer a left unit of $R$ because
  $R^0_{e0}$, $R^0_{ee'}$ and $e'\neq 0$. The other unit cases are
  similar. However, $e$ and $e'$ are weak units after the adjunction.
\end{proof}

\begin{lemma}\label{ex:no-collapse}
$\ell r$-Categories with zero (categorical semigroups) may have
several units. 
\end{lemma}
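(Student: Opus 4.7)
My plan is to exhibit the smallest possible example by reusing Example~\ref{ex:mult-units} and invoking Proposition~\ref{P:LR2cs}. The relational monoid $(\{e,e'\}, R)$ whose only triples are $R^e_{ee}$ and $R^{e'}_{e'e'}$ is already a partial $\ell r$-semigroup, hence an $\ell r$-category, with $\ell\, e = r\, e = e$ and $\ell\, e' = r\, e' = e'$: composition is trivially associative, the laws (1), (2) and (4) of Lemma~\ref{P:rel-props3} hold because $\ell$ and $r$ act as the identity, and $D^x_y \Leftrightarrow r\, x = \ell\, y$ holds because both sides are equivalent to $x = y$ on $\{e,e'\}$.

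Next I would adjoin a zero as prescribed by Proposition~\ref{P:LR2cs}, obtaining the three-element semigroup $S = \{e, e', 0\}$ with $e \cdot e = e$, $e' \cdot e' = e'$ and every other product equal to $0$, together with $\ell\, 0 = r\, 0 = 0$. To finish I would verify the axioms of Definition~\ref{D:categorical-semigroup} directly: associativity holds because for any triple $(x,y,z)$ both $(x\cdot y)\cdot z$ and $x\cdot (y\cdot z)$ equal $x$ when $x=y=z\in\{e,e'\}$ and equal $0$ otherwise; the laws (1) and (2) of Lemma~\ref{P:rel-props3} hold element-wise since $\ell$ and $r$ fix each of $e$, $e'$, $0$; the equation $r\, 0 = 0$ holds by construction; and the zero law $x\cdot y \neq 0 \Leftrightarrow x\neq 0 \land y \neq 0 \land r\, x = \ell\, y$ reduces, for $x,y\in\{e,e'\}$, to $x = y$, which matches the multiplication. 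Hence $e$ and $e'$ are two distinct units of the resulting categorical semigroup, in the sense that $\ell\, e = e$ and $\ell\, e' = e'$.

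There is no real obstacle here; the main conceptual point the example records is that the unit-collapse phenomenon observed for relational monoids with an adjoined zero at the start of Section~\ref{S:unit-collapse} does \emph{not} occur for $\ell r$-semigroups, because the functions $\ell$ and $r$ continue to separate units even after totalisation. Any discrete category on two or more objects, viewed as a categorical semigroup via Proposition~\ref{P:LR2cs}, would do equally well; the two-element example above is chosen purely for economy.
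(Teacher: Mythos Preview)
Your argument is correct: the discrete two-object $\ell r$-category $\{e,e'\}$ does embed via Proposition~\ref{P:LR2cs} into a three-element categorical semigroup in which $e$ and $e'$ remain distinct fixed points of $\ell$ and $r$, and your verification of the axioms goes through as written.

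The paper takes a slightly more economical route: it exhibits the two-element categorical semigroup $S=\{x,0\}$ with $x\cdot x = x$, $\ell\,x = r\,x = x$ and $\ell\,0 = r\,0 = 0$, and simply observes that $\ell\,0\neq\ell\,x$. In other words, the paper counts $0$ itself as one of the ``several units,'' since $0\in X_\ell$ in any categorical semigroup. Your example is one element larger but arguably illustrates the intended contrast with Lemma~\ref{ex:weak-units} more clearly: you show that two \emph{non-zero} units of the original $\ell r$-category survive the adjunction of $0$, which is precisely the failure of the collapse described at the start of Section~\ref{S:unit-collapse}. The paper's example is minimal; yours makes the ``no collapse'' point explicit by tracking the same units $e,e'$ through Lemmas~\ref{ex:mult-units}, \ref{ex:weak-units} and~\ref{ex:no-collapse}.
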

\begin{proof}
  The categorical semigroup $S=\{x,0\}$ with composition defined by
  $x\cdot x = x$ (the rest is forced) and $\ell$ and $r$ by
  $\ell\, 0 = 0 = r\, 0$ and $\ell x = x = r\, x$ satisfies
  $\ell\, 0\neq \ell\, x$ and $r\, 0 \neq r\, x$. 
\end{proof}


\bibliographystyle{alpha}
\bibliography{ofcat}

\end{document}